\newcommand{\ubar}[1]{\underaccent{\bar}{#1}}
\algrenewcommand\algorithmicrequire{\textbf{Input:}}
\algrenewcommand\algorithmicensure{\textbf{Output:}}
\begin{document}
\title{Computing all-vs-all MEMs in run-length-encoded collections of HiFi reads \thanks{Supported by Academy of Finland Grants 323233 and 339070}}
%
%
\author{Diego D\'iaz-Dom\'inguez \and
Simon J. Puglisi \and Leena Salmela}
\authorrunning{D. D\'iaz-Dominguez et al.}
%
\institute{Department of Computer Science, University of Helsinki, Finland
\email{\{diego.diaz,simon.puglisi,leena.salmela\}@helsinki.fi}}

\maketitle 

\begin{abstract} We describe an algorithm to find maximal exact matches (MEMs) among HiFi reads with homopolymer errors. The main novelty in our work is that we resort to run-length compression to help deal with errors. Our method receives as input a run-length-encoded string collection containing the HiFi reads along with their reverse complements. Subsequently, it splits the encoding into two arrays, one storing the sequence of symbols for equal-symbol runs and another storing the run lengths. The purpose of the split is to get the BWT of the run symbols and reorder their lengths accordingly. We show that this special BWT, as it encodes the HiFi reads and their reverse complements, supports bi-directional queries for the HiFi reads. Then, we propose a variation of the MEM algorithm of Belazzougui et al. (2013) that exploits the run-length encoding and the implicit bi-directional property of our BWT to compute approximate MEMs. Concretely, if the algorithm finds that two substrings, $a_1 \ldots a_p$ and $b_1 \ldots b_p$, have a MEM, then it reports the MEM only if their corresponding length sequences, $\ell^{a}_1 \ldots \ell^{a}_p$ and $\ell^{b}_1 \ldots \ell^{b}_p$, do not differ beyond an input threshold. We use a simple metric to calculate the similarity of the length sequences that we call the {\em run-length excess}. Our technique facilitates the detection of MEMs with homopolymer errors as it does not require dynamic programming to find approximate matches where the only edits are the lengths of the equal-symbol runs. Finally, we present a method that relies on a geometric data structure to report the text occurrences of the MEMs detected by our algorithm.



\keywords{Genomics \and Text indexing \and Compact data structures.}
\end{abstract}
\section{Introduction}

HiFi reads are a new type of DNA sequencing data developed by PacBio~\cite{wenger19ac}. They are long overlapping strings with error rates (mismatches) comparable to those of Illumina data. They have become popular in recent years as their features improve the accuracy of biological analyses~\cite{logsdon2020long}. Still, mapping a collection of HiFi reads against a reference genome or computing suffix-prefix overlaps among them for \emph{de novo} assembly remain important challenges as these tasks require approximate alignments of millions of long strings. Popular tools that address these problems use seed-and-extend algorithms with minimizers as seeds~\cite{li2018minimap2} for the alignments. This technique is a cheap solution that makes the processing of HiFi reads feasible.

An alternative approach is to use \emph{maximal exact matches} (MEMs) as seeds. A MEM is a match $S[a,b]=S'[a',b']$ between two strings $S$ and $S'$ that cannot be extended to the left or to the right without introducing mismatches or reaching an end in one of the sequences. MEMs are preferable over minimizers because they can capture long exact matches between the reads, thus reducing the computational costs of extending the alignments with dynamic programming. However, they are expensive to find in big collections.


A classical solution to detect MEMs among strings of a large collection is to concatenate the strings in one sequence $S$ (separated by sentinel symbols), construct the suffix tree of $S$, and then traverse its topology to find MEMs in linear time~\cite{gusf97}. Still, producing the suffix tree of a massive collection, although linear in time and space, is expensive for practical purposes. Common approaches to deal with the space overhead are sparse suffix trees~\cite{kh09prac,vy13essamem}, hash tables with $k$-mers~\cite{khiste2015mem,gra19cop}, and Bloom filters~\cite{liu2019fast}.

Another way to deal with the space overhead is to find MEMs on top of a compact suffix tree~\cite{sad07comp,ohl10cst}. For instance, Ohlebusch et al.~\cite{oh10comp} described a method that computes MEMs between two strings via matching statistics~\cite{cha94sub}. Their technique requires only one of the strings to be indexed using a compact suffix tree while the other is kept in plain format. Other more recent methods~\cite{ros22mon,bou21pho,ros22fin} follow an approach similar to that of Ohlebusch et al., but they use the r-index~\cite{gagie2020fully} instead of the compact suffix tree.

The problem with the algorithms that rely on matching statistics is that they consider input collections with two strings (one indexed and the other in plain format). It is not clear how to generalize these techniques to compute all-vs-all MEMs in a collection with an arbitrary number of sequences. A simple solution would be to implement classical MEM algorithms on top of the compact suffix tree. Still, producing a full compact suffix tree is expensive for genomic applications as it requires producing a sampled version of the suffix array~\cite{MM93}, the Burrows--Wheeler transform~\cite{bw94}, and the longest common prefix array~\cite{ka01li}. Although it is possible to obtain these composite data structures in linear time and space, in practice, they might require an amount of working memory that is several times the input size. In this regard, Belazzougui et al.~\cite{bel13ver} proposed a MEM algorithm that only uses the bi-directional BWT of the text, although their idea reports the sequences for the MEM, not their occurrences in the text.

Besides the input size, there is another relevant issue when computing MEMs in HiFi data: homopolymer errors. Concretely, if a segment of the DNA being sequenced has an equal-symbol run of length $\ell$, then the PacBio sequencer might spell imprecise copies of that run in the reads that overlap the segment. These copies have a correct\footnote{The symbol correctly represents the nucleotide that was read from the DNA molecule.} DNA symbol, but the value $\ell$ might be incorrect. In general, homopolymer errors shorten the alignment seeds, which means that the pattern matching algorithm will spend more time performing dynamic programming operations to extend the alignments. In this work, we study the problem of finding MEMs in HiFi reads efficiently. Our strategy is to use run-length encoding to remove the homopolymer errors, and then try to filter out the matches between different sequences that, by chance, were compressed to the same run-length encoded string. 

\subsubsection{Our contribution.} We present a set of techniques to compute all-vs-all MEMs in a collection of HiFi reads of $n$ symbols. We build on the MEM algorithm of Belazzougui et al. \cite{bel13ver} that uses the bi-directional BWT, a versatile succinct text representation that uses $2n\log \sigma + o(2n\log \sigma)$ bits of space. Strings in a DNA collection have two complementary sequences that we need to consider for the matches, meaning that we need to create a copy of the input with the complementary strings and merge all in one collection $\mathcal{R}$. We describe a framework that exploits the properties of these DNA complementary sequences to produce an implicit bi-directional BWT for $\mathcal{R}$ without increasing the input size by a factor of 4x. In addition, we define parameters to detect MEMs in a run-length-encoded representation of $\mathcal{R}$. Concretely, we propose the concept of run-length excess, which we use to differentiate homopolymer errors from sporadic matches generated by the run-length compression. Finally, we describe our variation of the algorithm  of Belazzougui et al.~\cite{bel13ver} for computing MEMs using our implicit bi-directional BWT constructed on a run-length-encoded version of $\mathcal{R}$, denoted $\mathcal{R}^{h}$. Let $S$ be a sequence of length $d=|S|$ that has $x$ occurrences in $\mathcal{R}^{h}$, with $l \leq x$ of them having MEMs with other positions of $\mathcal{R}^{h}$. Once our algorithm detects $S$, it can report its MEMs in $O(\sigma^{2}\log \sigma + x^2d)$ time, where $\sigma$ is the alphabet of $\mathcal{R}^{h}$. We also propose an alternative solution that uses a geometric data structure, and report the MEMs of $S$ in $O((x+\sigma)(1+ \log n_h/\log \log n_h)  + l^{2}d)$ time, where $n_h$ is the number of symbols in $\mathcal{R}^{h}$. 

\section{Preliminaries}

\subsubsection{Rank and select data structures.}~\label{sec:rs_dt}
Given a sequence $B[1,n]$ of symbols over the alphabet $\Sigma=[1,\sigma]$, the operation $\textsf{rank}_{a}(B,i)$, with $i \in [1,n]$ and $a\in\Sigma$, returns the number of times the symbol $a$ occurs in the prefix $B[1,i]$. On the other hand, the operation $\textsf{select}_a(B,r)$ returns the position of the $rth$ occurrence of $a$ in $B$. For binary alphabets, $B$ can be represented in $n+o(n)$ bits so that it is possible to solve $\textsf{rank}_{a}$ and $\textsf{select}_{a}$, with $a \in \{0,1\}$, in constant time~\cite{jac89rank,Cla96}. 

\subsubsection{Wavelet trees.}~\label{sec:wt}
Let $S[1,n]$ be a string of length $n$ over the alphabet $\Sigma=[1,\sigma]$. A wavelet tree~\cite{Grossi2003} is a tree data structure $W$ that encodes $S$ in $n\log \sigma + o(n\log \sigma)$ bits of space and supports several queries in $O(\log\sigma)$ time. Among them, the following are of interest for this work:

\begin{itemize}
    \item $\mathsf{access}(W, i)$: retrieves the symbol at position $T[i]$ 
    \item $\mathsf{rank}_{a}(W, i)$: number of symbols $a$ in the prefix $T[1,i]$ 
    \item $\mathsf{select}_{a}(W, r)$: position $j$ where the \emph{rth} symbol $a$ lies in $S$
\end{itemize}

The wavelet tree can also answer more elaborated queries efficiently~\cite{gag12nwt}. From them, the following are relevant:

\begin{itemize}
  \item $\textsf{rangeList}(W, i, j)$ : the list of all triplets $(c, r^{c}_i, r^{c}_j)$ such that $c$ is one of the distinct symbols within $S[i,j]$, $r^{c}_{i}$ is the rank of $c$ in $S[1,i-1]$, and $r^{c}_j$ is the rank of $c$ in $S[1,j]$. 
  \item $\textsf{rangeCount}(W, i,j,l,r)$ : number of symbols $y \in S[i,j]$ such that $l \leq y \leq r$. 
  
\end{itemize}

It is possible to answer $\mathsf{rangeList}$ in $O(u\log \frac{\sigma}{u})$ time, where $u$ is the number of distinct symbols in $S[i,j]$, and $\mathsf{rangeCount}$ in $O(\log \sigma)$ time.

\subsubsection{Suffix arrays and suffix trees.}\label{sec:sa_st}
Consider a string $S[1,n-1]$ over alphabet $\Sigma[2,\sigma]$, and the sentinel symbol $\Sigma[1]=\texttt{\$}$, which we insert at $S[n]$. The \emph{suffix array}~\cite{MM93} of $S$ is a permutation $S\!A[1,n]$ that enumerates the suffixes $S[i,n]$ of $S$ in increasing lexicographic order, $S[S\!A[i],n] < S[S\!A[i+1],n]$, for $i \in [1,n-1]$.

The suffix trie~\cite{fred60tries} is the trie $T$ induced by the suffixes of $S$. For every $S[i,n]$, there is a path $U=v_{1},v_{2},\ldots,v_{p}$ of length $p=n-i+2$ in $T$, where $v_{1}$ is the root and $v_{p}$ is a leaf. Each edge $(v_{j},v_{j+1})$ in $U$ is labeled with a symbol in $\Sigma$, and concatenating the edge labels from $v_{1}$ to $v_{p}$ produces $S[i,n]$. The child nodes of each internal node $v$ are sorted from left to right according to the ranks of the labels in the edges that connect them to $v$. Further, when two or more suffixes of $S$ have the same $j$-prefix, their paths in $T$ share the first $j+1$ nodes.

It is possible to compact $T$ by discarding each unary path $U=v_i,\ldots,v_{j}$ where every node $v_i, v_{i-1},\ldots,v_{j-1}$ has exactly one child. The procedure consists of removing the subpath $U'=v_{i+1},\ldots,v_{j-1}$ and connect $v_{i}$ with $v_{j}$ by an edge labeled with the concatenation of the labels in $U'$. The result is a compact trie of $n$ leaves and less than $n$ internal nodes called the \emph{suffix tree}~\cite{we73li}.

The suffix tree can contain other special edges that connect nodes from different parts of the tree, not necessarily a parent with its children. These edges are called suffix and Weiner links. Let us denote $\mathsf{label}(v)$ the string that labels the path starting at the root and ending at $v$. Two nodes $u$ and $v$ are connected by a suffix link $(u,v)$ if $\mathsf{label}(u)=aW$ and $\mathsf{label}(v)=W$. Similarly, an explicit Weiner link $(u,v)$ labeled $a$ occurs if $\mathsf{label}(u)=W$ and $\mathsf{label}(v)=aW$. A Weiner link is implicit when, for $\mathsf{label}(u)=W$, the sequence $aW$ matches a proper prefix of a node label (i.e., there is no node labeled $aW$). The suffix and Weiner links along with the suffix tree nodes yield another tree called the suffix link tree. 

%


\subsubsection{The Burrows--Wheeler transform.}\label{sec:bwt}

The \emph{Burrows--Wheeler transform} (BWT)~\cite{bw94} is a reversible string transformation that stores in $BWT[i]$ the symbol that precedes the $ith$ suffix of $S$ in lexicographical order, i.e., $BWT[i] = S[S\!A[i]-1]$ (assuming $S[0]=S[n]=\texttt{\$}$). 

The mechanism to revert the transformation is the so-called $\mathsf{LF}$ mapping. Given an input position $BWT[j]$ that maps a symbol $S[i]$, $\mathsf{LF}(j) = j'$ returns the index $j'$ such that $BWT[j']=S[i-1]$ maps the preceding symbol of $S[i]$. Thus, spelling $S$ reduces to continuously applying $\textsf{LF}$ from $BWT[1]$, the symbol to the left of $T[n]=\texttt{\$}$, until reaching $BWT[j]=\texttt{\$}$.

Implementing $\mathsf{LF}$ requires to encode $BWT$ with a data structure that supports $\mathsf{rank}_{a}$. A standard solution is to use the wavelet tree of Section~\ref{sec:wt}, which enables to answer $\mathsf{LF}$ in $O(\log \sigma)$ time. It is also necessary to have an array $C[1,\sigma]$ storing in $C[c]$ the number of symbols in $S$ that are lexicographically smaller than $c$. This enables the implementation of the inverse function for $\mathsf{LF}$ (denoted as $\mathsf{LF}^{-1}$). That is, given the position $BWT[j]$ that maps $S[i]$, $\mathsf{LF}^{-1}(j)=j'$ returns the index $j'$ such that $BWT[j']$ maps $S[i+1]$. 

The BWT also allows to count the number of occurrences of a pattern $P[1,m]$ in $S$ in $\mathcal{O}(m\log \sigma)$ time. The method, called $\mathsf{backwardsearch}$, builds on the observation that if the range $S\!A[s_{j},e_{j}]$ encoding the suffixes of $S$ prefixed by $P[j,m]$ is known, then it is possible to compute the next range $S\!A[s_{j-1},e_{j-1}]$ with the suffixes of $S$ prefixed by $P[j-1,m]$. This computation, or \emph{step}, requires two operations: $s_{j-1} = C[P[j-1]] + \mathsf{rank}_{P[j-1]}(BWT, s_j-1)+1$ and $e_{j-1}=C[P[j-1]] + \mathsf{rank}_{P[j-1]}(BWT, e_j)$. Thus, after $m$ steps of $O(\log \sigma)$ time each, $\mathsf{backwardsearch}$ will find the range $S\!A[s_{1}, e_{1}]$ encoding the suffixes of $S$ prefixed by $P[1,m]$ (provided $P$ exists as substring in $S$).

\subsubsection{Bi-directional BWT.}\label{sec:bibwt}
The bi-directional BWT~\cite{lam09high} of a string $S[1,n]$ is a data structure that maintains the BWT of $S$ and the BWT of the reverse of $S$ (denoted here as $\bar{S}$). Belazzougui et al.~\cite{bel13ver} demonstrated that it is possible to use this representation to visit the internal nodes in the suffix tree $T$ of $S$ in $O(n\log \sigma)$ time. 

The work of Belazzougui et al. exploits the fact that the suffixes of $S$ prefixed by the label of an internal node $v$ in $T$ are stored in a consecutive range $S\!A[s_v, e_v]$, and that $BWT[s_v, e_v]$ encodes the labels for the Weiner links of $v$. 

Let $S\!A_{S}$ and $BWT_{S}$ be the suffix array and BWT for $S$ (respectively). Equivalently, let $S\!A_{\bar{S}}$ and $BWT_{\bar{S}}$ be the suffix array and BWT for $\bar{S}$. For any sequence $X$, Belazzougui et al. maintain two pairs: $(s_X,e_X)$ and $(s_{\bar{X}}, e_{\bar{X}})$, where $S\!A_{S}[s_{X},e_{X}]$ stores the suffixes of $S$ prefixed by $X$ and $S\!A_{\bar{S}}[s_{\bar{X}}, e_{\bar{X}}]$ stores the suffixes of $\bar{S}$ prefixed by $\bar{X}$. They also define a set of primitives for the encoding $(s_X,e_X)$, $(s_{\bar{X}}, e_{\bar{X}})$ of $X$:

\begin{itemize}
\item $\mathsf{isLeftMaximal}(X):$  $1$ if $BWT_{S}[s_{X},e_{X}]$ contains more than one distinct symbol, and $0$ otherwise.
\item $\mathsf{isRightMaximal}(X):$ $1$ if $BWT_{\bar{S}}[s_{\bar{X}},s_{\bar{X}}]$ contains more than one distinct symbol, and $0$ otherwise.
\item $\mathsf{enumerateLeft}(X):$ list of distinct symbols in $BWT_{S}[s_{X},e_{X}]$.
\item $\mathsf{enumerateRight}(X):$ list of distinct symbols in $BWT_{\bar{S}}[s_{\bar{X}},e_{\bar{X}}]$
\item $\mathsf{extendLeft}(X, c):$ list $(i,j),(i',j')$ where $S\!A_{S}[i,j]$ is the range for $cX$ and $S\!A_{\bar{S}}[i',j']$ is the range for $\bar{X}c$
\item $\mathsf{extendRight}(X, c):$ list $(i,j),(i',j')$ where $S\!A_{S}[i,j]$ is the range for $Xc$ and $S\!A_{\bar{S}}[i',j']$ is the range for $c\bar{X}$
\end{itemize}

The key aspect of the bi-directional BWT is that, every time it performs a left or a right extension in $(s_{X}, e_{X})$ (respectively, $(s_{\bar{X}}, e_{\bar{X}})$), it also synchronizes $(s_{\bar{X}}, e_{\bar{X}})$ (respectively, $(s_{X}, e_{X})$). By encoding $BWT_{S}$ and $BWT_{\bar{S}}$ as wavelet trees (Section~\ref{sec:wt}), it is possible to perform $\mathsf{extendLeft}$ and $\mathsf{extendRight}$ in $O(\log \sigma)$ time using a backward search step (Section~\ref{sec:bwt}), and then synchronizing the other range with $\mathsf{rangeCount}$. The functions $\mathsf{enumerateLeft}$ and $\mathsf{enumerateRight}$ take $O(u \log \frac{\sigma}{u})$ time as they are equivalent to $\mathsf{rangeList}$. Finally, both $\mathsf{isLeftMaximal}$ and $\mathsf{isRightMaximal}$ run in $O(\log \sigma)$ time. 

Belazzougui et al. use the primitives above to traverse the suffix link tree and thus visiting the internal nodes of $T$ in $O(n\log \sigma)$ time.

\section{Our contribution}

\subsection{Definitions}\label{sec:def}

We consider the set $\{\texttt{A},\texttt{C},\texttt{G},\texttt{T}\}$ to be the \emph{DNA alphabet}. For practical reasons, we compact it to the set $\Sigma=[2,5]$, and regard $\Sigma[1]=\texttt{\$}$ as a \emph{sentinel} that is lexicographically smaller than any other symbol. Given a string $R$ in $\Sigma^{*}$, we define an \emph{homopolymer} as an equal-symbol run $R[i,j]=(c,\ell)$ of maximal length storing $\ell=j-i+1>1$ consecutive copies of a symbol $c$. Maximal length means that $i=1$ or $R[i-1]\neq c$, and $j=|R|$ or $R[j+1]\neq c$.

We regard the DNA \emph{complement} as a permutation $\pi[1,\sigma]$ that reorders the symbols in $\Sigma$, exchanging $2$ (\texttt{A}) with $5$ (\texttt{T}) and $3$ (\texttt{C}) with $4$ (\texttt{G}). The permutation does not modify $1$ (\texttt{\$}) as it does not represent a nucleotide (i.e., $\pi(1)=1$). The \emph{reverse complement} of $R$, denoted $\hat{R}$, is the string formed by reversing $R$ and replacing every symbol $R[i]$ by its complement $\pi(R[i])$. Given a DNA symbol $a \in \Sigma$, let us define the operator $\underline{a}=\pi(a)$ to denote the DNA complement of $a$. 

The input for our algorithm is a collection $\mathcal{X}=\{R_{1},R_{2}, \ldots, R_{k}\}$ of $k$ HiFi reads over the alphabet $\Sigma$. However, we operate over the expanded collection $\mathcal{R}=\{R_{1}\texttt{\$},\hat{R}_{1}\texttt{\$},\ldots,R_{k}\texttt{\$}, \hat{R}_{k}\texttt{\$}\}$ storing the reads of $\mathcal{X}$ along with their reverse complements, where all the strings have a sentinel appended at the end. $\mathcal{R}$ has $2k$ strings, with a total of $n = \Sigma^{k}_{i=1} 2(|R_{i}|+1)$ symbols. We refer to every possible sequence over the DNA alphabet that label a MEM in $\mathcal{R}$ as a \emph{MEM sequence}.

\subsection{Description of the problem}\label{sec:dp}

Before developing our ideas, we formalize our problem as follows.

\begin{definition}\label{def:prob}
Let $\mathcal{S} = \{S_{1}, S_{2}, \ldots, S_{k}\}$ be a string collection of $k$ strings and $n$ total symbols. The problem of all-vs-all MEMs consists in reporting every possible pair ($S_x[a,b], S_{y}[a',b']$) such that $S_{x},S_{y} \in \mathcal{S}$, $S_{x}\neq S_{y}$, and $S_x[a,b]=S_{y}[a',b']$ is a MEM of length $b-a+1=b'-a'+1 \geq \tau$, where $\tau$ is a parameter.
\end{definition}

HiFi data is usually strand unspecific, meaning that, for any two reads $R_a,R_b \in \mathcal{X}$, there are four possible combinations in which we can have MEMs: $(R_{a}, R_{b})$, $(\hat{R}_{a}, R_{b})$, $(R_{a}, \hat{R}_{b})$, $(\hat{R}_{a}, \hat{R}_{b})$. We can access all such combinations in $\mathcal{R}$, but not in $\mathcal{X}$. Hence, our algorithmic framework solves the problem of Definition~\ref{def:prob} using $\mathcal{R}$ as input. 

\section{Bi-directional BWT and DNA reverse complements}\label{sec:imp_bi_BWT}

In this section, we explain how to exploit the properties of the DNA reverse complements to implement an \emph{implicit} bi-directional BWT for $\mathcal{R}$ that does not require the BWT of the reverse sequences of $\mathcal{R}$ (see Section~\ref{sec:bibwt}). We assume the BWT of $\mathcal{R}$ is the BCR BWT~\cite{bauer13lw}, a variation for string collections. This decision is for technical convenience, and does not affect the output of our framework. We begin by describing the key property in our implicit bi-directional representation: 

\begin{lemma}\label{lemma:rc}
Let $X$ be a string over alphabet $\Sigma$ that appears as a substring in $\mathcal{R}$. Additionally, let the pairs $(s_X, e_X)$ and $(s_{\hat{X}}, e_{\hat{X}})$ be the ranges in $S\!A$ of $\mathcal{R}$ storing all suffixes prefixed by $X$ and $\hat{X}$, respectively. It holds that the sorted sequence of DNA complement symbols in $BWT[s_X, e_X]$ matches the right-context symbols of the occurrences of $\hat{X}$ when sorted in lexicographical order. This relationship applies symmetrically to $BWT[s_{\hat{X}}, e_{\hat{X}}]$ and the sorted occurrences of $X$.

\end{lemma}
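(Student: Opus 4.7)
The plan is to reduce the lemma to an equality of multisets and then establish that equality via an occurrence-level bijection. The sequence $BWT[s_X,e_X]$ enumerates, with multiplicity, the symbols immediately preceding each occurrence of $X$ in $\mathcal{R}$, and the symbols immediately following each occurrence of $\hat{X}$ in $\mathcal{R}$ likewise form a multiset. Since both sides of the claim are obtained by sorting a multiset of $\Sigma$-symbols, it is enough to show that applying the complement map $\pi$ element-wise to the multiset of left-context symbols of $X$-occurrences yields exactly the multiset of right-context symbols of $\hat{X}$-occurrences.

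First I would use the fact that $\mathcal{R}$ is closed under reverse complement by construction: for every read $R_i\texttt{\$}\in\mathcal{R}$ the string $\hat{R}_i\texttt{\$}$ also lies in $\mathcal{R}$, and $\hat{\hat{R}}_i=R_i$. This induces a bijection on occurrences: if $X$ occurs at position $p$ inside some $R'\in\mathcal{R}$, so that $R'[p,p+|X|-1]=X$, then $\hat{X}$ occurs at the mirror position $p'=|R'|-p-|X|+1$ of $\hat{R'}\in\mathcal{R}$, and conversely. The pairing is one-to-one because the reverse complement is an involution.

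Next I would check the context correspondence. From the identity $\hat{R'}[j]=\underline{R'[|R'|-j+1]}$, the symbol $R'[p-1]$ immediately preceding an $X$-occurrence corresponds to the symbol $\hat{R'}[p'+|X|]=\underline{R'[p-1]}$ immediately following the paired $\hat{X}$-occurrence. Hence each left-context symbol $a$ of an $X$-occurrence maps to the right-context symbol $\underline{a}=\pi(a)$ of the paired $\hat{X}$-occurrence. The boundary case in which $X$ sits at the start (or $\hat{X}$ at the end) of a read is handled by treating the missing context as $\texttt{\$}$, and since $\pi$ fixes the sentinel the bijection still matches contexts correctly.

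I expect the most delicate step to be this last one: keeping the bookkeeping clean across the sentinel-delimited boundaries of the BCR BWT, so that the one-to-one pairing never loses an occurrence or introduces a phantom left- or right-context. Once the bijection and the $a\mapsto\underline{a}$ correspondence are pinned down, the required multiset equality, and therefore the sorted-sequence equality stated in the lemma, is immediate; the symmetric claim about $BWT[s_{\hat{X}},e_{\hat{X}}]$ follows verbatim by swapping the roles of $X$ and $\hat{X}$ and using $\hat{\hat{X}}=X$.
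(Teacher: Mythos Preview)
Your proposal is correct and follows essentially the same approach as the paper: both arguments exploit the closure of $\mathcal{R}$ under reverse complement to pair each occurrence of $aX$ with an occurrence of $\hat{X}\underline{a}$, yielding the multiset equality between complemented left-contexts of $X$ and right-contexts of $\hat{X}$. Your write-up is more explicit about the index bookkeeping and the sentinel boundary case, whereas the paper's proof is a brief sketch that additionally tracks the right-context symbol $b$ (arguing that each $aXb$ pairs with $\underline{b}\hat{X}\underline{a}$), but the underlying idea is identical.
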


\begin{proof}
Assume the symbol $a \in \Sigma$ appears to the left of $p$ occurrences of $Xb$ in $\mathcal{R}$. We know that for each occurrence of $aXb$ in $\mathcal{R}$ there will be also an occurrence of $\ubar{b}\hat{X}\ubar{a}$ as we enforce that property by including the DNA reverse complements of the original reads (collection $\mathcal{X}$ of Section~\ref{sec:def}). As a result, $BWT[s_{Xb}, e_{Xb}]$ will contain $p$ copies of $a$ and $BWT[s_{\hat{X}\ubar{a}}, e_{\hat{X}\ubar{a}}]$ will contain $p$ copies of $\ubar{b}$.
\end{proof}

We will use Lemma~\ref{lemma:rc} to implement the functions $\mathsf{enumerateRight}$, $\mathsf{extendRight}$ and $\mathsf{isRightMaximal}$ (Section~\ref{sec:bibwt}) on top of the BWT of the text. Unlike the technique of Belazzougui et al., we synchronize the pairs $(s_X, e_X), (s_{\hat{X}}, e_{\hat{X}})$. Another difference is that both pairs $(s_X, e_X), (s_{\hat{X}}, e_{\hat{X}})$ map to the suffix array of the text. In the original version, the second pair maps to the suffix array of the reverse text. To implement the functions above, we need to update both pairs every time we perform $\mathsf{extendLeft}$ and $\mathsf{extendRight}$.

Belazzougui et al. implement $\mathsf{extendLeft}(X, c)$ by performing a backward search step over $BWT[s_X, e_X]$ using the symbol $c$. The result of this operation is the suffix array range for $cX$. To modify $(s_{\hat{X}}, e_{\hat{X}})$ so it maps to the suffix array range for $\hat{X}\ubar{c}$, we sum the frequencies of the distinct symbols within $BWT[s_X, e_X]$ whose DNA reverse complements are lexicographically smaller than $\ubar{c}$. This operation comes directly from Lemma~\ref{lemma:rc}. Assume the sum is $y$ and that the frequency of $c$ in $BWT[s_X, e_X]$ is $z$, then we compute $s_{\hat{X}\ubar{c}} = s_{\hat{X}} + y$ and $e_{\hat{X}\ubar{c}} = s_{\hat{X}} + y + z$. We use a special form of $\mathsf{rangeCount}$ to get the value for $y$. If $c < \ubar{c}$, then we will use $y = \mathsf{rangeCount}(BWT, s_X, e_X, c+1, \sigma)$. In the other case, $c> \ubar{c}$, we use $\mathsf{rangeCount}(BWT, s_X, e_X, 1, c-1)$. The rationale for computing $\mathsf{rangeCount}$ comes from the relationship between complementary nucleotides in the permutation $\pi$ of Section~\ref{sec:def}. The operation $\mathsf{extendRight}(X, c)$ is analogous; we perform the $\mathsf{backwardsearch}$ step over $BWT[s_{\hat{X}}, e_{\hat{X}}]$ using $\ubar{c}$ as input, and then we count the number of symbols that are lexicographically smaller than $c$.

The functions $\mathsf{enumerateRight}(X)$ and $\mathsf{isRightMaximal}(X)$ are implemented with minor changes. The only caveat is that, when we use $\mathsf{enumerateRight}$, we need to spell the DNA reverse complements of the symbols returned by $\mathsf{rangeList}$. 

\begin{corollary}
Given a collection $\mathcal{X}$ of DNA sequences and its expanded version $\mathcal{R}$ that contains the strings of $\mathcal{X}$ along their reverse complement sequences, we can construct an implicit bi-directional BWT index that does not require the BWT of the reverse of $\mathcal{R}$ and that answers the queries $\mathsf{enumerateRight}$, $\mathsf{extendRight}$ and $\mathsf{isRightMaximal}$ in $O(u \log \frac{\sigma}{u})$ and $O(\log \sigma)$ time, respectively, where $u$ is the number of distinct symbols within the input range for $\mathsf{extendRight}$.  
\end{corollary}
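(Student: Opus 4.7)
The plan is to build only a single BCR BWT of $\mathcal{R}$, encoded as a wavelet tree with the usual auxiliary $C$ array, and to carry around a synchronized pair of ranges $(s_X, e_X)$ and $(s_{\hat{X}}, e_{\hat{X}})$ that both index the suffix array of $\mathcal{R}$. The key observation is Lemma~\ref{lemma:rc}: the multiset of right-context symbols of $X$ in $\mathcal{R}$ is obtained by applying $\pi$ to the symbols of $BWT[s_{\hat{X}}, e_{\hat{X}}]$. Therefore every right-facing query on $X$ can be answered by interrogating the BWT range that indexes $\hat{X}$, with no need for the BWT of the reverse text.

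I would then verify each primitive separately. For $\mathsf{isRightMaximal}(X)$, Lemma~\ref{lemma:rc} and the fact that $\pi$ is a bijection reduce the question to whether $BWT[s_{\hat{X}}, e_{\hat{X}}]$ contains at least two distinct symbols, answerable in $O(\log \sigma)$ time on the wavelet tree. For $\mathsf{enumerateRight}(X)$, a single call to $\mathsf{rangeList}$ on the same range returns the $u$ distinct BWT symbols in $O(u \log \frac{\sigma}{u})$ time, after which applying the fixed permutation $\pi$ to each one yields the right-extension alphabet. For $\mathsf{extendRight}(X, c)$, I would first perform a standard backward-search step on $BWT[s_{\hat{X}}, e_{\hat{X}}]$ using $\ubar{c}$, which directly produces the updated pair $(s_{\hat{Xc}}, e_{\hat{Xc}})$. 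I would then synchronize $(s_X, e_X)$ into $(s_{Xc}, e_{Xc})$ by counting how many symbols of $BWT[s_{\hat{X}}, e_{\hat{X}}]$ correspond under $\pi$ to right-context symbols of $X$ strictly smaller than $c$; since $\pi$ is order-reversing on the four DNA symbols, this is a single $\mathsf{rangeCount}$ on either $[\ubar{c}+1, \sigma]$ or $[1, \ubar{c}-1]$, depending on whether $c<\ubar{c}$ or $c>\ubar{c}$. Adding that count to $s_X$ and then adding the frequency of $\ubar{c}$ in the range fixes both endpoints, all within $O(\log \sigma)$.

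The stated complexities follow immediately from these constructions, and no BWT of the reverse of $\mathcal{R}$ is ever used. The main obstacle in writing the proof up carefully is the correctness of the synchronization: I would prove by induction on the length of the substring $X$ that the invariant stating that $(s_X, e_X)$ and $(s_{\hat{X}}, e_{\hat{X}})$ are the suffix-array ranges of $X$ and $\hat{X}$ in $\mathcal{R}$ is preserved by each extension. The base case is the empty string, whose ranges trivially span all of $S\!A$; the inductive step appeals to Lemma~\ref{lemma:rc} together with the fact that $\mathcal{R}$ is closed under reverse complementation, which guarantees that the bijection between occurrences of $aXb$ and $\ubar{b}\hat{X}\ubar{a}$ aligns the counts extracted from one BWT range with the displacement required in the other. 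Once this invariant is in place, the complexity bounds drop out directly from the wavelet-tree costs recalled in Section~\ref{sec:wt}.
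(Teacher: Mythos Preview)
Your proposal is correct and follows essentially the same approach as the paper: the paper does not give a separate proof of the corollary but derives it from the preceding discussion, which is exactly the construction you describe (single BWT of $\mathcal{R}$, synchronized pair $(s_X,e_X),(s_{\hat X},e_{\hat X})$ both living in $S\!A$ of $\mathcal{R}$, Lemma~\ref{lemma:rc} to read right contexts of $X$ off $BWT[s_{\hat X},e_{\hat X}]$, backward search plus a $\mathsf{rangeCount}$ for each extension, and $\mathsf{rangeList}$ for enumeration). Your explicit induction on $|X|$ for the synchronization invariant is a welcome addition that the paper leaves implicit.
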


Observe the BWT for $\mathcal{R}$ is implicitly bi-directional as the DNA reverse complements are just the reverse strings with their symbols permuted according to $\pi$ (see Definitions). However, in the case of $\mathcal{R}$, both BWTs are merged in a single representation. Producing a standard bi-directional BWT would increase the size of $\mathcal{X}$ by a factor of 4. In real applications where the data is a multiset of DNA sequencing reads, we have to transform $\mathcal{X}$ into $\mathcal{R}$ regardless if we construct a bi-directional BWT as the reads are strand-unspecific (see Section~\ref{sec:dp}). 

\subsubsection{Contraction operations in the implicit bi-directional BWT.} Given a range $S\!A[i,j]$ of suffixes prefixed by a string $X$, and a parameter $w\leq |X|$, a \emph{contraction} operation returns the range $i'\geq i,j \leq j'$ in $S\!A$ storing the suffixes of the text prefixed by $X[1,w]$. It is possible to solve this query efficiently with either the wavelet tree of the LCP or with a compact data structure that encodes the suffix tree's topology. The problem with those solutions is that we have to deal with the overhead of constructing and storing those representations. We describe how to use our implicit bi-directional BWT to visit the ancestors of a node $v$ in the suffix tree in $O(|\mathsf{label}(v)|\log \sigma)$ time to solve contraction operations. This idea is slower than using the LCP or the suffix tree's topology, but it does not require extra space, and it is faster than the quadratic cost of using a regular BWT. Our technique is a byproduct of our framework, and it is of independent interest. The inputs for the ancestors' traversal are the range $S\!A[s_v, e_v]$ for $v$, and its string depth $d=|\mathsf{label}(v)|$. The procedure is as follows: starting from $BWT[s_v]$, we perform $d$ $\mathsf{LF}^{-1}$ operations to spell $\mathsf{label}(v)$. Simultaneously as we spell the sequence, we also perform backward search steps using the DNA complement of the symbols we obtain with $\mathsf{LF}^{-1}$. We use Lemma~\ref{lemma:rc} to keep the ranges of the backward search steps synchronised with the ranges for the distinct prefixes of $\mathsf{label}(v)$. Recall that $\mathsf{backwardsearch}$ consumes the input from right to left. In our case, this input is a sequence $W$ that matches the DNA reverse complement of $\mathsf{label}(v)$. Thus, by Lemma~\ref{lemma:rc}, we know that visiting the $S\!A$ ranges for the suffixes of $W$ is equivalent to visit the $S\!A$ ranges for the prefixes of $\mathsf{label}(v)$. Finally, each time we obtain a new range $S\!A[i',j']$ with the backward search step, we use $\mathsf{isLeftmaximal}$ to check if $BWT[i',j']$ is unary. If that is the case, then we report the synchronized range of $S\!A[i',j']$ as an ancestor of $v$. The rationale is that if $W$ is left-maximal, then $\hat{W}=\mathsf{label}(v)[1, |W|]$ is right-maximal too, and hence, its sequence is the label of an ancestor of $v$ in the suffix tree.

\subsection{Homopolymer errors and MEM sequences}\label{sec:hom_mem}


A MEM algorithm that runs on top of the suffix tree of $\mathcal{R}$ is unlikely to report all the real\footnote{Those we would obtain in a collection with no homopolymer errors.} matches if the input collection is HiFi data. The difficulty is that some of the MEMs are ``masked'' in the suffix tree. More specifically, suppose we have two nodes $v$ and $u$, with $\mathsf{label}(v)\neq \mathsf{label}(u)$. It might happen that, by removing or adding copies of symbols in the equal-symbol runs of $\mathsf{label}(u)$, we can produce $\mathsf{label}(v)$. If those edits are small enough for the PacBio machine to produce them during the sequencing process, then it is plausible to assume that $\mathsf{label}(u)$ is an homopolymer error of $\mathsf{label}(v)$. This situation becomes even more likely if $\mathsf{label}(u)$ is long and its frequency is low in the collection. 

Looking for all the possible suffix tree nodes that only have small differences in the length of homopolymer runs similar to $v$ and $u$
could be expensive. A simple workaround is to run-length compress $\mathcal{R}$ and execute the suffix-tree-based MEM algorithm with that as input. Now the problem is that we can report false positive MEMs between different sequences that have the same run-length representation but that are not homopolymer errors. Fortunately, filtering those false positive is not so difficult. Before explaining our idea, we formally define the notion of equivalence between sequences. 


\begin{definition}~\label{def:eq_seqs}
Let $A$ be a string whose run-length encoding is the sequence of pairs $A=(a_1,\ell_{1}),(a_{2},\ell_{2}), \ldots, (a_{p},\ell_{p})$, where $a_{i}$ is the symbol of the $ith$ equal-symbol run, and $\ell_{i}\geq 1$ is its length. Additionally, let the operator $\mathsf{rlc}(A)=a_1,a_2,\ldots,a_{p}$ denote the sequence of run heads for $A$. We say that two strings $A$ and $B$ are equivalent iff $\mathsf{rlc}(A) = \mathsf{rlc}(B)$.
\end{definition}

We use equivalent sequences (Definition~\ref{def:eq_seqs}) to define a filtering parameter to discard false positive MEMs. We call this parameter the $\emph{run-length excess}$:

\begin{definition}
Let $A$ and $B$ be two distinct strings with $\mathsf{rlc}(A)=\mathsf{rlc}(B)$. Additionally, let the pair sequences $A=(x_{1},\ell^{a}_{1}), (x_{2}, \ell^{a}_{2}),\ldots,(x_{p},\ell^{a}_{p})$ and $B=(x_{1},\ell^{b}_{1}), (x_{2}, \ell^{b}_{2}),\ldots,(x_{p},\ell^{b}_{p})$ be the run-length encoding for $A$ and $B$, respectively. Each $x_{i} \in \Sigma$ is the $ith$ run head, and $\ell^{a}_{i}, \ell^{b}_{i} \geq 1$ are the lengths for $x_{i}$ in $A$ and $B$, respectively. Now consider the string $E=|\ell^{a}_{1}-\ell^{b}_{1}|,\ldots,|\ell^{a}_{n}-\ell^{b}_{n}|$ storing the absolute differences between the run lengths of $A$ and $B$. We define the run-length excess as $\mathsf{rlexcess}(A,B)=\max(E[1],E[2],\ldots,E[n])$. 
\end{definition}

Intuitively, equivalent sequences that have a high run-length excess are unlikely to have a masked MEM. The reason is because, although the PacBio sequencing process makes mistakes estimating the lengths of the equal-symbol runs, the error in the estimation is unlikely to be high.

Now that we have a framework to detect MEMs in run-length-compressed space, we construct a new collection $\mathcal{R}^{h}$ of $n_{h}\leq n$ symbols encoding the same strings of $\mathcal{R}$ but with their homopolymers compacted. Namely, every equal-symbol run $R_{u}[i,j]=(c,\ell)$ of maximal length $\ell>1$ in $\mathcal{R}$ is represented with a special metasymbol $c^{*} \notin \Sigma$ in $\mathcal{R}^{h}$. We store the $\ell$ values in another list $H$, sorted as their respective homopolymers occur in $\mathcal{R}$. Each element of $\Sigma$ has its own metasymbol, including the sentinel. We reorder the alphabet $\Sigma \cup \Sigma^{h}$ of $\mathcal{R}^{h}$ to the set $\{\texttt{\$}, \texttt{A}, \texttt{A}^{*}, \texttt{C}, \texttt{C}^{*}, \texttt{G}^{*}, \texttt{G}, \texttt{T}^{*}, \texttt{T}, \texttt{\$}^{*}\}$, which we map to its compact version $\Sigma^{hp} = [1,10]$. This reordering will facilitate the synchronization of ranges when we perform $\mathsf{extendLeft}$ or $\mathsf{extendRight}$ in our implicit bi-directional BWT. 

Recall from Section~\ref{sec:imp_bi_BWT} that, when we call the operation $\mathsf{extendLeft}(X, c)$ (respectively, $\mathsf{extendRight}(X, c)$), we need to perform $\mathsf{rangeCount}(BWT, s_{X}, e_{X})$ to get the number of symbols within $BWT[s_{X}, e_{X}]$ whose DNA complements are smaller than $c$. For this counting operation to serve to synchronize $BWT[s_{\hat{X}}, e_{\hat{X}}]$ in constant time, we need the BWT alphabet to be symmetric. Concretely, the permutation $\pi$ for the DNA complements has to exchange $\Sigma^{hp}[1]$ with $\Sigma^{hp}[\sigma]$, $\Sigma^{hp}[2]$ with $\Sigma^{hp}[\sigma-1]$, and so on. This is the reason why the sentinel has a metasymbol too, even though there are no sentinel homopolymers in $\mathcal{R}$. Additionally, we define a function $g: \Sigma^{hp} \rightarrow \Sigma$ to map metasymbols back to their nucleotides in $\Sigma$. When the input for $g$ is not a metasymbol, $g$ returns the nucleotide itself.

The next step is to run the suffix-tree-based algorithm to solve the all-vs-all MEM problem of Definition~\ref{def:prob} (see Section~\ref{sec:bibwt}) using $\mathcal{R}^{h}$ as input. However, we add one extra step. For each candidate MEM $(R_{a}[i,j], R_{b}[i',j'])$, with $R_{a}, R_{b} \in \mathcal{R}^{h}$, reported by the algorithm, we check if the run-length excess between $R_{a}[i,j]$ and $R_{b}[i,j]$ is below some minimum threshold $e$. If that is not the case, then we discard that pair as a MEM. We can easily compute the run-length excess value using the suffix array of $\mathcal{R}^{h}$ and the vector $H$. If the MEM algorithm detects that an internal node $v$ of the suffix tree encodes a list of MEMs, then we use the suffix array of $\mathcal{R}^{h}$ to access the text positions $\mathsf{label}(v)$. Subsequently, we map those positions to $H$ to get the lengths of the distinct variations of $\mathsf{label}(v)$ on the text, and thus compute excess among them.

\subsection{Computing MEMs in compressed space}\label{sec:gst_mem}

We now have all the elements to solve Problem~\ref{def:prob} in run-length-compressed
space using our implicit bi-directional BWT. Our input is the BWT of $\mathcal{R}^{h}$ (encoded as a wavelet tree $BWT$), the array $H$ storing the lengths of the homopolymers in the HiFi reads, and the parameters $\tau$ and $e$ for, respectively, the minimum MEM length and the maximum run-length excess (see Section~\ref{sec:hom_mem}).

We resort to the algorithm of Belazzougui et al.~\cite{bel13ver} to visit the internal nodes in the suffix tree $T$ of $\mathcal{R}^{h}$ in $O(n_{h}\log |\Sigma^{hp}|)$ time, with $n_{h}=\Sigma_{1}^{|\mathcal{R}_{h}|} |R_{i}|$ (see Section~\ref{sec:sa_st}). The advantage of their method is that we can use backward search operations over $BWT$ to navigate $T$ without visiting its edge labels (i.e., unary paths in the suffix trie of $\mathcal{R}^{h}$). Algorithm~\ref{algo:st} describes the procedure.

Each internal node $v$ of $T$ with more than one Weiner link (i.e., $BWT[s_v, e_v]$ is not unary) encodes a group of MEMs. This property holds because $\mathsf{label}(v)$ has more than one left-context symbol and more than one right-context symbol in the text. Thus, any possible combination of strings $a{\cdot}\mathsf{label}(v){\cdot}b$ and $y{\cdot}\mathsf{label}(v){\cdot}z$ we can decode from $v$, with $a,b,y,z \in \Sigma^{hp}$, $a\neq y$, and $b\neq z$, corresponds to a MEM sequence (see Definitions). The sequences we obtain from $v$ can have multiple occurrences in $\mathcal{R}^{h}$, and we need to report all of them. However, some of them might be false positives. For instance, the pair of text positions conforming a MEM are in the same string, or in strings that are DNA reverse complements of each other. We filter those cases as they are artefacts in our model.

When we visit a node $v$ with more than one Weiner link during the traversal of $T$, we access its MEM sequences as follows: we use $\mathsf{enumerateRight}$ and $\mathsf{extendRight}$ to compute every range $S\!A[s_u, e_u]$, with $s_{v}\leq s_u \leq e_u\leq e_v$, encoding a child $u$ of $v$. Then, over each $S\!A[s_u, e_u]$, we perform $\mathsf{enumerateLeft}$ and $\mathsf{extendLeft}$ to compute every range $S\!A[s^{c}_u, e^{c}_u]$ encoding a Weiner link $c$ of $u$. This procedure yields a set $\mathcal{M} = \{\mathcal{I}_{1}, \mathcal{I}_{2},\ldots, \mathcal{I}_{p}\}$, where $p$ is the number of children of $v$, and $\mathcal{I}_{q}$, with $q \in [1, p]$, is the set of ranges in $S\!A$ for the Weiner links of the $qth$ child of $v$ (from left to right).

The next step is to report the text positions of the MEM sequences encoded by $\mathcal{M}$. For this purpose, we consider the list of pairs $\{(\mathcal{I}_{e}, \mathcal{I}_{g})\ |\ \mathcal{I}_{e},\mathcal{I}_g \in \mathcal{M}\ \textrm{and}\ \mathcal{I}_e \neq \mathcal{I}_g\}$. Every element $(S\!A[i,j],S\!A[i',j']) \in \mathcal{I}_{e} \times \mathcal{I}_{g}$ is a pair of ranges such that $S\!A[i,j]$ stores the suffixes of $\mathcal{R}^{h}$ prefixed by a label $a{\cdot}\mathsf{label}(v){\cdot}b$ and $S\!A[i',j']$ stores the suffixes of $\mathcal{R}^{h}$ prefixed by another label $y{\cdot}\mathsf{label}(v){\cdot}z$. We know that $b$ and $z$ are different as they come from different children of $v$. However, the symbols $a$ and $y$ might be equal, which means $\mathsf{label}(v)$ is not a MEM sequence when we match $a{\cdot}\mathsf{label}(v){\cdot}b$ and $y{\cdot}\mathsf{label}(v){\cdot}z$. We can find out this information easily: if $S\!A[i,j]$ and $S\!A[i',j']$ come from different buckets\footnote{The $bth$ bucket of $S\!A$ is the range containing all suffixes prefixed by symbol $b \in \Sigma$.}, then $a\neq y$.  If that is the case, we have to report the MEMs associated to $(S\!A[i,j],S\!A[i',j'])$. For doing so, we first get the string depth $d=|\mathsf{label}(v)|$ of $v$. Then, we regard $X=\{i,\ldots,j\}$ and $O=\{i',\ldots,j'\}$ as two different sequences of consecutive indexes in $S\!A$, and iterate over their Cartesian product $X \times O$. When we access a pair ($S\!A[x]$, $S\!A[o]$), with $(x,o) \in X \times O$, we compute the run-length excess $e'$ between $\mathcal{R}^{h}[S\!A[x]+1, S\!A[x]+d]$ and $\mathcal{R}^{h}[S\!A[o]+1, S\!A[o]+d]$ as described in Section~\ref{sec:hom_mem}, and discard the MEM in $(S\!A[x],S\!A[o])$ if $e' \geq e$. We also discard it if $S\!A[x]$ and $S\!A[o]$ map the same string or map different strings that are reverse complements between each other. This procedure is described in Algorithm~\ref{alg:mem}.

\begin{theorem}
Let $\mathcal{R}^{h}$ be the run-length encoded collection of HiFi reads, with an alphabet of $\sigma_{h} = |\Sigma^{hp}|$ symbols. Additionally, let $v$ be an internal node in the suffix tree of $\mathcal{R}^{h}$ that has more than one Weiner link. The string depth of $v$ is $d = |\mathsf{label}(v)|$ and its associated range $S\!A[i,j]$ has length $x=j-i+1$. We can compute all the MEMs encoded by $v$ in $O(\sigma_{h}^{2}\log \sigma_{h} + x^2d)$ time. 
\end{theorem}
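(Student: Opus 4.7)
The plan is to bound separately the cost of (i) building the set $\mathcal{M}=\{\mathcal{I}_{1},\ldots,\mathcal{I}_{p}\}$ of Weiner-link ranges below $v$ using the implicit bi-directional BWT primitives, and (ii) traversing $\mathcal{M}$ to emit and filter the candidate MEMs with the run-length excess test.

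For (i), I would invoke $\mathsf{enumerateRight}$ on $v$, which yields the $p \leq \sigma_{h}$ right-extension symbols, and then fire one $\mathsf{extendRight}$ per symbol to materialize each child range $S\!A[s_{u},e_{u}]$, at cost $O(\log \sigma_{h})$ per call. On every child $u$ I would invoke $\mathsf{enumerateLeft}$ to list its at most $\sigma_{h}$ Weiner-link symbols and then launch one $\mathsf{extendLeft}$ per symbol at $O(\log \sigma_{h})$ each. Summing over the $\leq \sigma_{h}$ children gives at most $\sigma_{h}^{2}$ extension calls plus the enumeration calls, for a total of $O(\sigma_{h}^{2}\log \sigma_{h})$.

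For (ii), I would iterate, as in Algorithm~\ref{alg:mem}, over ordered pairs $(\mathcal{I}_{e},\mathcal{I}_{g})$ with $e\neq g$, then over all $(R,R')\in\mathcal{I}_{e}\times\mathcal{I}_{g}$, and finally over the Cartesian product $R\times R'$ of the two suffix-array sub-ranges. The key combinatorial observation is that the Weiner-link ranges inside each $\mathcal{I}_{e}$ partition the $e$-th child's range, and the children themselves partition $S\!A[i,j]$, so
\[
\sum_{e\neq g}\Big(\sum_{R\in\mathcal{I}_{e}}|R|\Big)\Big(\sum_{R'\in\mathcal{I}_{g}}|R'|\Big)\;\leq\;\Big(\sum_{e}|\mathrm{child}(e)|\Big)^{2}=x^{2}.
\]
For each of the $O(x^{2})$ resulting index pairs, the $O(1)$ bucket test discards matches whose left contexts coincide, the string-identity check on $S\!A[x]$ and $S\!A[o]$ rules out self- and reverse-complement matches, and computing $\mathsf{rlexcess}$ from $H$ over the two length-$d$ windows takes $O(d)$ time. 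Phase (ii) therefore runs in $O(x^{2}d)$.

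Adding phases (i) and (ii) gives the claimed $O(\sigma_{h}^{2}\log \sigma_{h}+x^{2}d)$. The main subtle point I expect is the telescoping argument above: without recognising that the ranges stored in $\mathcal{M}$ form a two-level partition of $S\!A[i,j]$, one would charge each of the up to $\sigma_{h}^{2}$ pairs of ranges its full worst-case size and get a much weaker bound; the partitioning is precisely what collapses the total work to $O(x^{2})$ pair probes.
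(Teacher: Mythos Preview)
Your proposal is correct and mirrors the paper's proof: phase~(i) bounds the construction of $\mathcal{M}$ by $O(\sigma_h^2\log\sigma_h)$ via at most $\sigma_h$ children times at most $\sigma_h$ Weiner links each, and phase~(ii) bounds the pair enumeration by $O(x^2d)$ using that the total number of suffixes represented in $\mathcal{M}$ is exactly $x$. One small wording issue: after $\mathsf{extendLeft}$ the Weiner-link ranges in $\mathcal{I}_e$ do \emph{not} literally partition the $e$-th child's range (backward search relocates them into the bucket of the left symbol), but what you actually need---and what your displayed inequality uses---is only that $\sum_{R\in\mathcal{I}_e}|R|=|\mathrm{child}(e)|$, which holds because backward search is size-preserving; with that caveat the telescoping argument is exactly the paper's ``the number of suffixes of $\mathcal{R}^h$ in $\mathcal{M}$ is $x$'' made explicit.
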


\begin{proof}
We first compute the ranges for the children of $v$ with the operations $\mathsf{enumerateRight}$ and $\mathsf{extendRight}$. These two operations take $O(\sigma_{h} \log \sigma_{h})$ time. Then, for every child, we compute its Weiner links. The node $v$ has up to $\sigma_{h}$ children, each child has up to $\sigma_{h}$ Weiner links, and to compute each of these takes $\log \sigma_{h}$ time via $\mathsf{extendLeft}$, making $O(\sigma_{h}^{2}\log \sigma_{h})$ time in total. The number of suffixes of $\mathcal{R}^{h}$ in $\mathcal{M}$ is $x$, and the total number of suffix pairs we visit during the scans of the Cartesian products between sets of $\mathcal{M}$ is bound by $x^{2}$. Each time we visit a pair of suffixes, computing the run-length excess between them takes us $O(d)$ time. Thus, the time for reporting the MEMs from $v$ is $O(\sigma_{h}^{2}\log \sigma_{h} + x^2d)$.\qed
\end{proof}

\subsection{Improving the time complexity for reporting MEMs}



We can think of the problem of reporting MEMs from $v$ as two-dimensional sorting. We need the occurrences of $\mathsf{label}(v)$ to be sorted by their left and right contexts at the same time (the dimensions) to report the MEMs from $v$ efficiently. We can implement this idea using a grid $\mathcal{G}$ with dimensions $n_{h} \times n_{h}$. We (logically) label the rows of $\mathcal{G}$ with the suffixes of $\mathcal{R}^{h}$ sorted in lexicographical order, and do the same with the columns. We then store the values of $S\!A$ in the grid cells, with the (row,column) coordinate for each $S\!A[j]$ being $(j, \mathsf{LF}(j))$. We encode $\mathcal{G}$ with the data structure of Chan et al.~\cite{chan2011or} that increases the space by $O(n_{h}\log n_{h}) + o(n_{h}\log n_{h})$ bits and allows reporting of the $occ$ points in the area $[x_1,x_2],[y_1,y_2]$ of $\mathcal{G}$ in $O((occ+1)(1+\log n_h / \log\log n_h))$ time. In exchange, we no longer require $S\!A$.

The procedure to report MEMs is now as follows. When we reach $v$ during the suffix tree traversal, we perform $\mathsf{extendLeft}$ with each of $v$'s Weiner links. This produces a list $\mathcal{L}$ of up to $\sigma_{h}$ non-overlapping ranges in $S\!A$. We then create another list $\mathcal{Q}$ with the ranges obtained by following $v$'s children. Notice that the ranges of $\mathcal{Q}$ are a partition of the range $[i,j]$ in $S\!A$ for $\mathsf{label}(v)$. For every $[l_1, l_2] \in \mathcal{L}$, we extract the points in $\mathcal{G}$ in the area $[l_1,l_2],[i,j]$, and partition the result into subsets according to $\mathcal{Q}$. The partition is simple as the points can be reported in increasing order of the $y$ coordinates (range $[i,j]$). The idea is to generate a list $\mathcal{I}=\{I_{1},I_{2},\ldots,I_{x}\}$ of at most $\sigma_{h}^{2}$ elements, where each element is a point set for an area $[l_1, l_2],[q_1,q_2] \in \mathcal{L} \times \mathcal{Q}$. Finally, we scan all possible distinct pairs of $\mathcal{I}$ that yield MEMs, processing suffixes as in lines 18-23 of Algorithm~\ref{alg:mem}. Let $I_{i}, I_{j} \in \mathcal{I}$ be two point sets, extracted from the areas $[l_1,l_2],[q_1, q_2]$ and $[l'_1, l'_2],[q'_1, q'_2]$ of $\mathcal{G}$, respectively. The points of $I_{i}$ will have MEMs with the points of $I_{j}$ if $[l_1, l_2] \neq [l'_1, l'_2]$ and $[q_1,q_2]\neq[q'_1,q'_2]$. See Figure~\ref{fig:grid}. 

\begin{corollary}
By replacing $S\!A$ with the grid of~Chan et al.~\cite{chan2011or}, reporting the MEMs associated with internal node $v$ of the suffix tree of $\mathcal{R}^{h}$ takes $O((x+\sigma)(1+ \log n_h/\log \log n_h)  + l^{2}d)$ time, where $x$ is the number of occurrences of $\mathsf{label}(v)$ in $\mathcal{R}^{h}$, $l\leq x$ is the number of those occurrences that have MEMs, and $d=\mathsf{label}(v)$.
\end{corollary}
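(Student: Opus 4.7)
The plan is to bound the running time by splitting the procedure into three phases and charging each to a different term in the bound: (i) building the two lists $\mathcal{L}$ of Weiner-link ranges and $\mathcal{Q}$ of child ranges of $v$; (ii) issuing the geometric queries on $\mathcal{G}$ and partitioning the returned points into the collection $\mathcal{I}$; and (iii) the Cartesian-product scan over pairs of buckets in $\mathcal{I}$ that actually yield MEMs.

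For phase (i), both $\mathcal{L}$ and $\mathcal{Q}$ are produced with the primitives of the implicit bi-directional BWT described in Section~\ref{sec:imp_bi_BWT}: one call each to $\mathsf{enumerateLeft}$ and $\mathsf{enumerateRight}$ on the range $S\!A[i,j]$, followed by at most $\sigma$ calls to $\mathsf{extendLeft}$ and $\sigma$ calls to $\mathsf{extendRight}$, each taking $O(\log \sigma)$ time. This totals $O(\sigma \log \sigma)$, which is subsumed by the $O(\sigma(1+\log n_h/\log\log n_h))$ term in the claimed bound.

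For phase (ii), I would charge each query on $\mathcal{G}$ with the complexity of Chan et al.~\cite{chan2011or}. For every $[l_1,l_2]\in\mathcal{L}$ we issue one range query $[l_1,l_2]\times[i,j]$, reporting $occ$ points in $O((occ+1)(1+\log n_h/\log\log n_h))$ time. Summed over the $|\mathcal{L}|\le \sigma$ queries, the additive ``$+1$'' contributes $O(\sigma(1+\log n_h/\log\log n_h))$, and since each occurrence of $\mathsf{label}(v)$ has a unique left context it is reported by exactly one query, so the total output size is at most $x$, contributing $O(x(1+\log n_h/\log\log n_h))$. Because the output of each query comes sorted by $y$-coordinate, we can walk through $\mathcal{Q}$ in parallel and assign every reported point to its $(\mathcal{L}\text{-range},\mathcal{Q}\text{-range})$ bucket in $\mathcal{I}$ in amortized constant time per point, without adding to the asymptotic cost.

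For phase (iii), I would enumerate the pairs $(I_e,I_g)\in\mathcal{I}\times\mathcal{I}$ coming from different $\mathcal{L}$-ranges \emph{and} different $\mathcal{Q}$-ranges (exactly the pairs that can yield MEMs), and iterate over $I_e\times I_g$, spending $O(d)$ per candidate to compute the run-length excess via $H$ and to check the same-read / reverse-complement artefacts using a constant-time lookup from $S\!A$-position to read identifier. The main obstacle is showing this phase costs only $O(l^2 d)$ rather than $O(x^2 d)$. The key observation is that a bucket $I_e$ is touched only if there exists a partner $I_g$ with strictly different $\mathcal{L}$- and $\mathcal{Q}$-range, which is precisely the definition of ``$I_e$ contains occurrences having a MEM''; all points in such ``active'' buckets are MEM-contributing, and by hypothesis there are $l$ of them in total. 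Writing $L_e=1$ iff $I_e$ is active, one gets $\sum_{(e,g)\text{ valid}}|I_e|\,|I_g|\le\bigl(\sum_e L_e|I_e|\bigr)^2 = l^2$, giving the $O(l^2 d)$ term. Summing the three phases yields the bound of the corollary.
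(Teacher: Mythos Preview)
Your three-phase decomposition mirrors exactly the procedure the paper sketches in the paragraph preceding the corollary (the paper gives no separate formal proof), and your accounting for phases (i) and (ii) is correct. Your key observation for phase (iii)---that only ``active'' buckets are involved in valid pairs and that the total number of points in active buckets is $l$, yielding $\sum_{(e,g)\text{ valid}} |I_e||I_g|\le l^2$---is the essential point the paper leaves implicit.

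One small gap: you assert that ``a bucket $I_e$ is touched only if there exists a partner $I_g$'', but you do not say how you avoid iterating over \emph{all} pairs of non-empty buckets in $\mathcal{I}$ (up to $\sigma_h^2$ many, or up to $x$ many) to discover which are active; done na\"ively this would add an $O(\sigma_h^4)$ or $O(x^2)$ term absent from the claimed bound. This is easy to patch: with per-row and per-column point counts $R_a$ and $C_b$ (computable in $O(|\mathcal{I}|+\sigma_h)=O(x+\sigma_h)$ time) a bucket $(a,b)$ is active iff $x-R_a-C_b+|I_{(a,b)}|>0$, after which you restrict the pair enumeration to active buckets only, and there are at most $l$ of those. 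The paper is equally silent on this point, so your argument is already at least as detailed as the original.
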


\section{Concluding remarks}

We presented a framework to compute all-vs-all MEMs in a collection of run-length encoded HiFi reads. Our techniques can be adapted to other types of collections with properties similar to that of HiFi data (e.g., Nanopore sequencing data, proteins, Phred scores, among others). The larger alphabet of proteins and Phred scores make our MEM reporting algorithm that uses the geometric data structure more relevant (as it avoids the $\sigma^{2}$ complexity of our first method). We are also applying these techniques to {\em de novo} assembly of HiFi reads. 

\bibliography{references}

\begin{thebibliography}{10}

\bibitem{bauer13lw}
Markus~J. Bauer, Anthony~J. Cox, and Giovanna Rosone.
\newblock {Lightweight algorithms for constructing and inverting the BWT of
  string collections}.
\newblock {\em Theoretical Computer Science}, 483:134--148, 2013.

\bibitem{bel13ver}
Djamal Belazzougui, Fabio Cunial, Juha K{\"a}rkk{\"a}inen, and Veli
  M{\"a}kinen.
\newblock Versatile succinct representations of the bidirectional
  burrows-wheeler transform.
\newblock In {\em Proc. 21st European Symposium on Algorithms (ESA)}, pages
  133--144, 2013.

\bibitem{bou21pho}
Christina Boucher, Travis Gagie, Tomohiro I, Dominik K{\"o}ppl, Ben Langmead,
  Giovanni Manzini, Gonzalo Navarro, Alejandro Pacheco, and Massimiliano Rossi.
\newblock {PHONI}: Streamed matching statistics with multi-genome references.
\newblock In {\em Proc. 21st Data Compression Conference (DCC)}, pages
  193--202, 2021.

\bibitem{bw94}
Michael Burrows and David Wheeler.
\newblock {A block sorting lossless data compression algorithm}.
\newblock Technical Report 124, Digital Equipment Corporation, 1994.

\bibitem{chan2011or}
Timothy Chan, Kasper~Green Larsen, and Mihai P{\u{a}}tra{\c{s}}cu.
\newblock {Orthogonal range searching on the RAM, revisited}.
\newblock In {\em Proc. 27th Annual Symposium on Computational Geometry
  (SoCG)}, pages 1--10, 2011.

\bibitem{cha94sub}
William~I. Chang and Eugene~L. Lawler.
\newblock Sublinear approximate string matching and biological applications.
\newblock {\em Algorithmica}, 12(4):327--344, 1994.

\bibitem{Cla96}
David Clark.
\newblock {\em Compact {PAT} Trees}.
\newblock PhD thesis, University of Waterloo, Canada, 1996.

\bibitem{fred60tries}
Edward Fredkin.
\newblock Trie memory.
\newblock {\em Communications of the ACM}, 3(9):490–499, 1960.

\bibitem{gagie2020fully}
Travis Gagie, Gonzalo Navarro, and Nicola Prezza.
\newblock Fully functional suffix trees and optimal text searching in
  {BWT}-runs bounded space.
\newblock {\em Journal of the ACM (JACM)}, 67(1):1--54, 2020.

\bibitem{gag12nwt}
Travis Gagie, Gonzalo Navarro, and Simon~J. Puglisi.
\newblock New algorithms on wavelet trees and applications to information
  retrieval.
\newblock {\em Theoretical Computer Science}, 426:25--41, 2012.

\bibitem{gra19cop}
Szymon Grabowski and Wojciech Bieniecki.
\newblock {copMEM}: finding maximal exact matches via sampling both genomes.
\newblock {\em Bioinformatics}, 35(4):677--678, 2019.

\bibitem{Grossi2003}
Roberto Grossi, Ankur Gupta, and Jeffrey~Scott Vitter.
\newblock {High--Order Entropy--Compressed Text Indexes}.
\newblock In {\em Proc. 14th Annual ACM-SIAM Symposium on Discrete Algorithms
  (SODA)}, pages 841--850, 2003.

\bibitem{gusf97}
Dan Gusfield.
\newblock {\em Algorithms on Strings, Trees, and Sequences: Computer Science
  and Computational Biology}.
\newblock Cambridge University Press, 1997.

\bibitem{jac89rank}
Guy Jacobson.
\newblock Space-efficient static trees and graphs.
\newblock In {\em Proc. 30th Annual Symposium on Foundations of Computer
  Science (FOCS)}, pages 549--554, 1989.

\bibitem{ka01li}
Toru Kasai, Gunho Lee, Hiroki Arimura, Setsuo Arikawa, and Kunsoo Park.
\newblock Linear-time longest-common-prefix computation in suffix arrays and
  its applications.
\newblock In {\em Proc. 12th Annual Symposium on Combinatorial Pattern Matching
  (CPM)}, pages 181--192, 2001.

\bibitem{kh09prac}
Zia Khan, Joshua~S. Bloom, Leonid Kruglyak, and Mona Singh.
\newblock A practical algorithm for finding maximal exact matches in large
  sequence datasets using sparse suffix arrays.
\newblock {\em Bioinformatics}, 25(13):1609--1616, 2009.

\bibitem{khiste2015mem}
Nilesh Khiste and Lucian Ilie.
\newblock {E-MEM}: efficient computation of maximal exact matches for very
  large genomes.
\newblock {\em Bioinformatics}, 31(4):509--514, 2015.

\bibitem{lam09high}
Tak~Wah Lam, Ruiqiang Li, Alan Tam, Simon Wong, Edward Wu, and Siu-Ming Yiu.
\newblock High throughput short read alignment via bi-directional bwt.
\newblock In {\em Proc. 3rd International Conference on Bioinformatics and
  Biomedicine (BIBM)}, pages 31--36, 2009.

\bibitem{li2018minimap2}
Heng Li.
\newblock Minimap2: pairwise alignment for nucleotide sequences.
\newblock {\em Bioinformatics}, 34(18):3094--3100, 2018.

\bibitem{liu2019fast}
Yuansheng Liu, Leo~Yu Zhang, and Jinyan Li.
\newblock Fast detection of maximal exact matches via fixed sampling of query
  k-mers and bloom filtering of index k-mers.
\newblock {\em Bioinformatics}, 35(22):4560--4567, 2019.

\bibitem{logsdon2020long}
Glennis~A. Logsdon, Mitchell~R. Vollger, and Evan~E. Eichler.
\newblock Long-read human genome sequencing and its applications.
\newblock {\em Nature Reviews Genetics}, 21(10):597--614, 2020.

\bibitem{MM93}
Udi Manber and Gene Myers.
\newblock {Suffix arrays: a new method for on--line string searches}.
\newblock {\em SIAM Journal on Computing}, 22(5):935--948, 1993.

\bibitem{ohl10cst}
Enno Ohlebusch, Johannes Fischer, and Simon Gog.
\newblock {CST}++.
\newblock In {\em Proc. 17th International Symposium on String Processing and
  Information Retrieval (SPIRE)}, pages 322--333, 2010.

\bibitem{oh10comp}
Enno Ohlebusch, Simon Gog, and Adrian K{\"u}gel.
\newblock Computing matching statistics and maximal exact matches on compressed
  full-text indexes.
\newblock In {\em Proc. 17th International Symposium on String Processing and
  Information Retrieval (SPIRE)}, pages 347--358, 2010.

\bibitem{ros22fin}
Massimiliano Rossi, Marco Oliva, Paola Bonizzoni, Ben Langmead, Travis Gagie,
  and Christina Boucher.
\newblock Finding maximal exact matches using the r-index.
\newblock {\em Journal of Computational Biology}, 29(2):188--194, 2022.

\bibitem{ros22mon}
Massimiliano Rossi, Marco Oliva, Ben Langmead, Travis Gagie, and Christina
  Boucher.
\newblock {MONI}: A pangenomic index for finding maximal exact matches.
\newblock {\em Journal of Computational Biology}, 29(2):169--187, 2022.

\bibitem{sad07comp}
Kunihiko Sadakane.
\newblock Compressed suffix trees with full functionality.
\newblock {\em Theory of Computing Systems}, 41(4):589--607, 2007.

\bibitem{vy13essamem}
Micha{\"e}l Vyverman, Bernard De~Baets, Veerle Fack, and Peter Dawyndt.
\newblock {essaMEM}: finding maximal exact matches using enhanced sparse suffix
  arrays.
\newblock {\em Bioinformatics}, 29(6):802--804, 2013.

\bibitem{we73li}
Peter Weiner.
\newblock {Linear pattern matching algorithms}.
\newblock In {\em Proc. 14th Annual Symposium on Switching and Automata Theory
  (SWAT)}, pages 1--11, 1973.

\bibitem{wenger19ac}
Aaron~M. Wenger, Paul Peluso, William~J. Rowell, Pi-Chuan Chang, Richard~J.
  Hall, Gregory~T. Concepcion, Jana Ebler, Arkarachai Fungtammasan, Alexey
  Kolesnikov, Nathan~D. Olson, et~al.
\newblock Accurate circular consensus long-read sequencing improves variant
  detection and assembly of a human genome.
\newblock {\em Nature Biotechnology}, 37(10):1155--1162, 2019.

\end{thebibliography}
\appendix
\clearpage\section*{Appendix}

\begin{algorithm}[!htp]
\caption{Computing MEMs in one traversal of the suffix tree $T$ of $\mathcal{R}^{h}$. Arrays $BWT$, $S\!A$, and $H$ are implicit in the pseudo-code. Each node $v \in T$ is encoded by the pair $(v, d)$, where $v=(i, j),(i',j')$ are the ranges in $S\!A$ for $\mathsf{label}(v)$ and its DNA reverse complement $\mathsf{label}(\hat{v})$, and $d$ is the string depth.}
\begin{algorithmic}[1]
\Require Suffix tree $T$ for $\mathcal{R}^{h}$ encoded by the implicit bi-directional BWT.
\Ensure MEMs as described in Definition~\ref{def:prob}.
\State $S \gets \emptyset$ \Comment{Empty stack}
\State $r \gets (1, n+1),(1,n+1)$ \Comment{The root of $T$}
\State $\mathsf{push}(S, (r ,0))$ 
\While{$S \neq \emptyset$}

\State $(v, d) \gets \mathsf{top}(S)$ \Comment{Extract suffix tree node $v$ from the top of the stack}
\State $\mathsf{pop}(S)$

\If{$d \geq \tau$ and $\mathsf{isLeftMaximal}(v)$ and $\mathsf{isRightMaximal}(v)$}
\State $\mathsf{repMEM}(v, e, d)$
\EndIf

\For{$c \in \mathsf{enumerateLeft}(v)$}\Comment{Continue visiting other suffix tree nodes}
\State $u \gets \mathsf{extendLeft}(v, c)$
\If{$\mathsf{isLeftMaximal}(u)$}
\State $\mathsf{insert}(S, (u, d+1))$
\EndIf

\EndFor
\EndWhile
\end{algorithmic}
\label{algo:st}
\end{algorithm}

\begin{figure}[!h]
\centering
\includegraphics[width=0.45\textwidth]{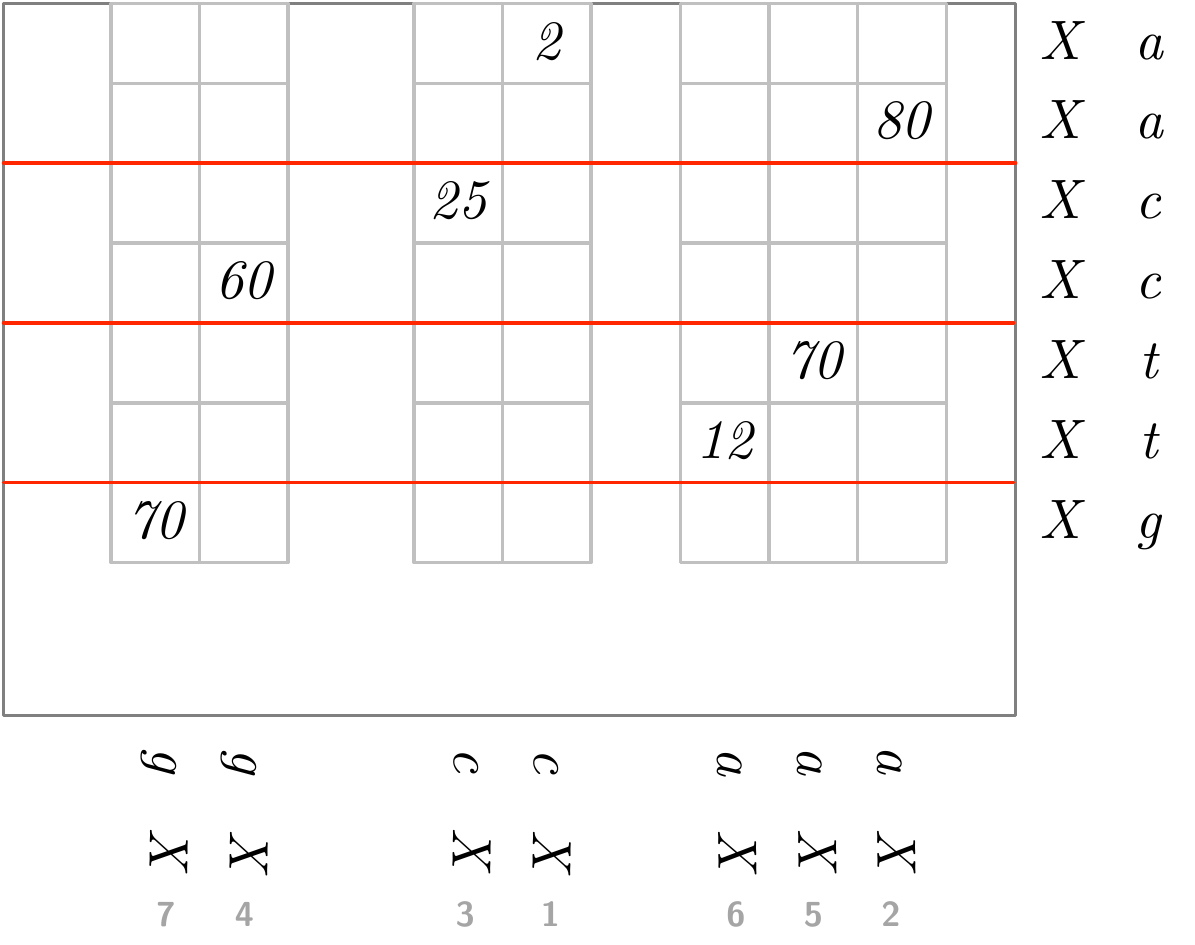}
\caption{
Reporting MEMs from an internal node $v$ labeled $\mathsf{label}(v)=X$ using the grid $\mathcal{G}$. The rows are labeled with the suffixes prefixed by $X$, while the column are labeled with the suffixes prefixed with the labels of $v$'s Weiner links. The horizontal red lines represents the partition of the $S\!A$ range for $X$ induced by the children of $v$. The grey numbers below the column labels are the $\mathsf{LF}^{-1}$ values. For each column $j'$, its associated $S\!A$ value is in the row $\mathsf{LF}^{-1}(j')=j$.
}
\label{fig:grid}
\end{figure}

\begin{algorithm}[t]
\caption{Report all-vs-all MEMs from a suffix tree node $v$. Arrays $BWT$ and $H$ for $\mathcal{R}^{h}$ are implicit in the pseudo-code. Node $v$ is encoded as described in Algorithm~\ref{algo:st}}\label{alg:mem}
\begin{algorithmic}[1]
\Require An internal node $v \in T$ with more than one Weiner link, run-length excess threshold $e$, and $d=|\mathsf{label}(v)|$. 
\Ensure List of MEMs among strings of $\mathcal{R}^{h}$ that can be computed from $v$.
\Procedure{$\mathsf{repMEM}$}{$v$, $d$, $e$}
\State $\mathcal{C} \gets \emptyset$ 
\For{$c \in \mathsf{enumerateRight}(v)$}\Comment{Partition $S\!A[v.i, v.j]$ according $v$'s children} 
\State $\mathcal{C} \gets \mathcal{C} \cup \{\mathsf{extendRight}(v, c)\}$
\EndFor

\State $\mathcal{M} \gets \emptyset$ 
\For{$x \gets 1$ to $|\mathcal{C}|$} \Comment{Get Weiner links for every child of $v$} 
\State $\mathcal{I}_{x} \gets \emptyset$
\For{$d \gets \mathsf{enumerateLeft}(\mathcal{C}[x])$}
\State $\mathcal{I}_{x} \gets \mathcal{I}_{x} \cup \{\mathsf{extendLeft}(\mathcal{C}[x], d)\}$
\EndFor
\State $\mathcal{M} \gets \mathcal{M} \cup \mathcal{I}_{x}$
\EndFor

\For{$\mathcal{I}_{a},\mathcal{I}_{b} \in \mathcal{M}$ with $\mathcal{I}_{a} \neq \mathcal{I}_{b}$} \Comment{$\mathcal{I}_{a}$ and $\mathcal{I}_{b}$ come from different children of $v$}

\For{$(X,Y) \in \mathcal{I}_{a} \times \mathcal{I}_{b}$}
\If{$X$ and $Y$ belong to distinct $S\!A$ bucket}
\For{$(q,e) \in X \times Y$}
\State $R_{q} \gets$ string in $\mathcal{R}^{h}$ for $SA[q]+1$
\State $R_{e} \gets$ string in $\mathcal{R}^{h}$ for $SA[e]+1$
\State $e' \gets \mathsf{rlExcess}(S\!A[q]+1, S\!A[e]+1, d)$ 
\If{$e'\leq e$}
\State Report MEM in $(q,e)$ 
\EndIf
\EndFor
\EndIf
\EndFor

\EndFor
\EndProcedure
\end{algorithmic}
\label{alg:cap}
\end{algorithm}

\end{document}